\pgfplotsset{compat=1.18}
\newtheorem{definition}{Definition}
\newtheorem{proposition}{Proposition}
\newtheorem{theorem}[proposition]{Theorem}
\newtheorem{corollary}[proposition]{Corollary}
\def\squareforqed{\hbox{\rlap{$\sqcap$}$\sqcup$}}
\def\qed{\ifmmode\squareforqed\else{\unskip\nobreak\hfil
\penalty50\hskip1em\null\nobreak\hfil\squareforqed
\parfillskip=0pt\finalhyphendemerits=0\endgraf}\fi}
\def\endenv{\ifmmode\;\else{\unskip\nobreak\hfil
\penalty50\hskip1em\null\nobreak\hfil\;
\parfillskip=0pt\finalhyphendemerits=0\endgraf}\fi}
\newenvironment{proof}{\noindent \textbf{{Proof~} }}{\hfill $\blacksquare$}
\newcounter{remark}
\newcounter{example}
\mathchardef\ordinarycolon\mathcode`\:
\def\vcentcolon{\mathrel{\mathop\ordinarycolon}}
\newmdenv[skipabove=7pt,
skipbelow=7pt,
backgroundcolor=darkblue!15,
innerleftmargin=5pt,
innerrightmargin=5pt,
innertopmargin=5pt,
leftmargin=0cm,
rightmargin=0cm,
innerbottommargin=5pt,
linewidth=1pt]{tBox}
\newmdenv[skipabove=7pt,
skipbelow=7pt,
backgroundcolor=blue2!25,
innerleftmargin=5pt,
innerrightmargin=5pt,
innertopmargin=5pt,
leftmargin=0cm,
rightmargin=0cm,
innerbottommargin=5pt,
linewidth=1pt]{dBox}
\newmdenv[skipabove=7pt,
skipbelow=7pt,
backgroundcolor=darkkblue!15,
innerleftmargin=5pt,
innerrightmargin=5pt,
innertopmargin=5pt,
leftmargin=0cm,
rightmargin=0cm,
innerbottommargin=5pt,
linewidth=1pt]{sBox}
\definecolor{darkblue}{RGB}{0,76,156}
\definecolor{darkkblue}{RGB}{0,0,153}
\definecolor{blue2}{RGB}{102,178,255}
\definecolor{darkred}{RGB}{195,0,0}
\newcommand{\nc}{\newcommand}
\nc{\rnc}{\renewcommand}
\nc{\lbar}[1]{\overline{#1}}
\nc{\bra}[1]{\langle#1|}
\nc{\ket}[1]{|#1\rangle}
\nc{\ketbra}[2]{|#1\rangle\!\langle#2|}
\nc{\braket}[2]{\langle#1|#2\rangle}
\nc{\proj}[1]{| #1\rangle\!\langle #1 |}
\nc{\avg}[1]{\langle#1\rangle}
\nc{\rank}{\operatorname{Rank}}
\nc{\smfrac}[2]{\mbox{$\frac{#1}{#2}$}}
\nc{\tr}{\operatorname{Tr}}
\nc{\ox}{\otimes}
\nc{\dg}{\dagger}
\nc{\dn}{\downarrow}
\nc{\cA}{{\cal A}}
\nc{\cB}{{\cal B}}
\nc{\cC}{{\cal C}}
\nc{\cD}{{\cal D}}
\nc{\cE}{{\cal E}}
\nc{\cF}{{\cal F}}
\nc{\cG}{{\cal G}}
\nc{\cH}{{\cal H}}
\nc{\cI}{{\cal I}}
\nc{\cJ}{{\cal J}}
\nc{\cK}{{\cal K}}
\nc{\cL}{{\cal L}}
\nc{\cM}{{\cal M}}
\nc{\cN}{{\cal N}}
\nc{\cO}{{\cal O}}
\nc{\cP}{{\cal P}}
\nc{\cQ}{{\cal Q}}
\nc{\cR}{{\cal R}}
\nc{\cS}{{\cal S}}
\nc{\cT}{{\cal T}}
\nc{\cU}{{\cal U}}
\nc{\cV}{{\cal V}}
\nc{\cX}{{\cal X}}
\nc{\cY}{{\cal Y}}
\nc{\cZ}{{\cal Z}}
\nc{\cW}{{\cal W}}
\nc{\csupp}{{\operatorname{csupp}}}
\nc{\qsupp}{{\operatorname{qsupp}}}
\nc{\var}{{\operatorname{var}}}
\nc{\rar}{\rightarrow}
\nc{\lrar}{\longrightarrow}
\nc{\polylog}{{\operatorname{polylog}}}
\nc{\wt}{{\operatorname{wt}}}
\nc{\av}[1]{{\left\langle {#1} \right\rangle}}
\nc{\supp}{{\operatorname{supp}}}
\nc{\argmin}{{\operatorname{argmin}}}
\def\x{\xi}
\nc{\RR}{{{\mathbb R}}}
\nc{\CC}{{{\mathbb C}}}
\nc{\FF}{{{\mathbb F}}}
\nc{\NN}{{{\mathbb N}}}
\nc{\ZZ}{{{\mathbb Z}}}
\nc{\PP}{{{\mathbb P}}}
\nc{\QQ}{{{\mathbb Q}}}
\nc{\UU}{{{\mathbb U}}}
\nc{\EE}{{{\mathbb E}}}
\nc{\id}{{\operatorname{id}}}
\nc{\CHSH}{{\operatorname{CHSH}}}
\nc{\be}{\begin{equation}}
\nc{\ee}{{\end{equation}}}
\nc{\bea}{\begin{eqnarray}}
\nc{\eea}{\end{eqnarray}}
\nc{\rU}{\mbox{U}}
\nc{\ob}[1]{#1}
\nc{\SEP}{{\text{\rm SEP}}}
\nc{\NS}{{\text{\rm NS}}}
\nc{\LOCC}{{\text{\rm LOCC}}}
\nc{\PPT}{{\text{\rm PPT}}}
\nc{\EXT}{{\text{\rm EXT}}}
\nc{\Sym}{{\operatorname{Sym}}}
\nc{\ERLO}{{E_{\text{r,LO}}}}
\nc{\ERLOCC}{{E_{\text{r,LOCC}}}}
\nc{\ERPPT}{{E_{\text{r,PPT}}}}
\nc{\ERLOCCinfty}{{E^{\infty}_{\text{r,LOCC}}}}
\nc{\Aram}{{\operatorname{\sf A}}}
\def\grd@save@target#1{%
  \def\grd@target{#1}}
\def\grd@save@start#1{%
  \def\grd@start{#1}}
\tikzset{
  grid with coordinates/.style={
    to path={%
      \pgfextra{%
        \edef\grd@@target{(\tikztotarget)}%
        \tikz@scan@one@point\grd@save@target\grd@@target\relax
        \edef\grd@@start{(\tikztostart)}%
        \tikz@scan@one@point\grd@save@start\grd@@start\relax
        \draw[minor help lines,magenta] (\tikztostart) grid (\tikztotarget);
        \draw[major help lines] (\tikztostart) grid (\tikztotarget);
        \grd@start
        \pgfmathsetmacro{\grd@xa}{\the\pgf@x/1cm}
        \pgfmathsetmacro{\grd@ya}{\the\pgf@y/1cm}
        \grd@target
        \pgfmathsetmacro{\grd@xb}{\the\pgf@x/1cm}
        \pgfmathsetmacro{\grd@yb}{\the\pgf@y/1cm}
        \pgfmathsetmacro{\grd@xc}{\grd@xa + \pgfkeysvalueof{/tikz/grid with coordinates/major step}}
        \pgfmathsetmacro{\grd@yc}{\grd@ya + \pgfkeysvalueof{/tikz/grid with coordinates/major step}}
        \foreach \x in {\grd@xa,\grd@xc,...,\grd@xb}
        \node[anchor=north] at (\x,\grd@ya) {\pgfmathprintnumber{\x}};
        \foreach \y in {\grd@ya,\grd@yc,...,\grd@yb}
        \node[anchor=east] at (\grd@xa,\y) {\pgfmathprintnumber{\y}};
      }
    }
  },
  minor help lines/.style={
    help lines,
    step=\pgfkeysvalueof{/tikz/grid with coordinates/minor step}
  },
  major help lines/.style={
    help lines,
    line width=\pgfkeysvalueof{/tikz/grid with coordinates/major line width},
    step=\pgfkeysvalueof{/tikz/grid with coordinates/major step}
  },
  grid with coordinates/.cd,
  minor step/.initial=.2,
  major step/.initial=1,
  major line width/.initial=2pt,
}
\def\problem@s{}
\newcounter{problems@cnt}
\newcommand{\allproblems}{\problem@s}
\definecolor{beamer}{rgb}{0.2,0.2,0.7}
\definecolor{colorone}{rgb}{1,0.36,0.03}
\definecolor{colortwo}{rgb}{0.4,0.77,0.17}
\definecolor{colorthree}{rgb}{0.01,0.51,0.93}
\definecolor{colorfour}{rgb}{0.47,0.26,0.58}
\definecolor{colorfive}{rgb}{0.12,0.55,0.16}
\nc{\st}{\text{subject to} \ }
\nc{\supre}{\text{supremum} \ }
\nc{\sdp}{\text{sdp}}
\nc{\ith}[1]{{#1}^\mathrm{th}}
\begin{document}
\title{Variational quantum Hamiltonian engineering}
 
\author{Benchi Zhao}
\email{benchizhao@gmail.com}
\affiliation{Graduate School of Engineering Science, Osaka University, 1-3 Machikaneyama, Toyonaka, Osaka 560-8531, Japan}

\author{Keisuke Fujii}
\email{fujii@qc.ee.es.osaka-u.ac.jp}
\affiliation{Graduate School of Engineering Science, Osaka University, 1-3 Machikaneyama, Toyonaka, Osaka 560-8531, Japan}
\affiliation{Center for Quantum Information and Quantum Biology, Osaka University, 1-2 Machikaneyama, Toyonaka 560-8531, Japan}
\affiliation{RIKEN Center for Quantum Computing (RQC), Hirosawa 2-1, Wako, Saitama 351-0198, Japan}

\begin{abstract}
The Hamiltonian of a quantum system is represented in terms of operators corresponding to the kinetic and potential energies of the system. The expectation value of a Hamiltonian and Hamiltonian simulation are two of the most fundamental tasks in quantum computation. The overheads for realizing the two tasks are determined by the Pauli norm of Hamiltonian, which sums over all the absolute values of Pauli coefficients. In this work, we propose a variational quantum algorithm (VQA) called variational quantum Hamiltonian engineering (VQHE) to minimize the Pauli norm of Hamiltonian, such that the overhead for executing expectation value estimation and Hamiltonian simulation can be reduced. First, we develop a theory to encode the Pauli norm optimization problem into the vector $l_1$-norm minimization problem. Then we devise an appropriate cost function and utilize the parameterized quantum circuits (PQC) to minimize the cost function. We also conduct numerical experiments to reduce the Pauli norm of the Ising Hamiltonian and molecules' Hamiltonian to show the efficiency of the proposed VQHE. 
\end{abstract}

\maketitle

\section{Introduction}

Quantum computing, which harnesses the principles of quantum mechanics to revolutionize computation, is a promising solution to tackle problems that are currently intractable for classical computers. In recent decades, quantum technologies have emerged with a growing number of powerful applications across diverse fields such as optimization ~\cite{farhi2014quantum,harrigan2021quantum}, chemistry~\cite{mcardle2020quantum, google2020hartree}, security~\cite{bennett2014quantum,ekert1991quantum}, and machine learning~\cite{biamonte2017quantum}. In quantum computing, the Hamiltonian of a quantum system, an operator that represents the total energy of a quantum system, plays a pivotal role in many critical tasks such as expectation value estimation of a Hamiltonian and Hamiltonian simulation.

Estimating the expectation value of a Hamiltonian is a fundamental task in quantum mechanics and quantum computing, especially when simulating quantum systems or solving quantum chemistry problems. The expectation value of a Hamiltonian $H$ with respect to a quantum state $\ket{\psi}$ is given by $\langle H \rangle=\bra{\psi}H\ket{\psi}$. This quantity represents the average energy of the quantum state $\ket{\psi}$ when measured on the basis of the eigenstates of the Hamiltonian $H$. The most straightforward method to estimate the expectation value is to decompose the Hamiltonian into Pauli terms, i.e., $H=\sum_{i=1}^L h_i P_i$, and estimate the expectation value of each term $\langle P_i \rangle$. Then calculate the expectation value by post-processing $\langle H \rangle=\sum_i h_i \langle P_i \rangle$. The total measurement time will be $N\sim L^2|h_{\rm max}^2|$~\cite{kandala2017hardware,mcclean2016theory}, where $L$ is the number of non-zero Pauli coefficients, and $h_{\rm max}$ is the largest Pauli coefficient. While the number of Pauli terms typically increases polynomially in the system size, it amounts to be large for complex quantum systems such as those huddled in quantum chemistry, which makes this method no longer efficient.
To tackle this problem, one can sample the Pauli term $P_i$ of the target Hamiltonian $H$ randomly with a certain probability $p_i$, which is proportional to its absolute value of the coefficient, i.e., $p_i=\frac{|h_i|}{\sum_i|h_i|}$~\cite{wecker2015progress,huggins2021efficient, rubin2018application}. In this case, the total measurement time is determined by the summation of the absolute value of Pauli coefficients, i.e., $N\sim \sum_{i=1}^L|h_i|$.

Hamiltonian simulation is another fundamental task in quantum computing, which simulates the dynamics of a quantum system. The standard methods for Hamiltonian simulation, such as Trotter-Suzuki decomposition~\cite{trotter1959product, suzuki1991general}, are practical for sparse Hamiltonian. The gate count of such a method, which quantifies the complexity of Hamiltonian simulation, depends on the number of Pauli terms in the Hamiltonian. If the system's Hamiltonian is not sparse, then the consumption of the standard method will be too large to be acceptable. To address this problem, Campbell proposed an approach called quantum stochastic drift protocol (qDrift)~\cite{campbell2019random}. This protocol weights the probability of gates by the corresponding interaction strength in the Hamiltonian, leading to a gate count independent of the number of terms in the Hamiltonian but the summation of the absolute value of Pauli coefficients, i.e., $\sum_{i=1}^L|h_i|$.

The complexity of both tasks, expectation value estimation and Hamiltonian simulation, are dependent on the summation of the absolute value of Pauli coefficients, which is called \textit{Pauli norm} (or \textit{stabilizer norm}) of Hamiltonian. For a given Hamiltonian $H=\sum_{i=1}^{4^n}h_i P_i$, the Pauli norm $\|H\|_p$ is defined by
\begin{equation}
    \|H\|_p = \sum_{i=1}^{4^n}|h_i|.
\end{equation}

It is natural to wonder if a method exists to reduce the Pauli norm of a given Hamiltonian $H$ such that the complexities of the two tasks can be mitigated further. Our answer to this question is positive. In this work, we proposed a variational quantum algorithm (VQA)~\cite{cerezo2021variational,mitarai2018quantum} called variational quantum Hamiltonian engineering (VQHE) to minimize the Pauli norm of Hamiltonian, and the complexity of the two tasks are reduced correspondingly.  VQA is a popular paradigm for near-term quantum applications, which uses a classical optimizer to train parameterized quantum circuits (PQCs)~\cite{benedetti2019parameterized} to achieve certain tasks. VQAs have been successfully applied to various tasks such as searching ground state~\cite{kandala2017hardware, peruzzo2014variational} and excited state~\cite{mcclean2017hybrid, nakanishi2019subspace}, quantum classification~\cite{schuld2020circuit,farhi2018classification}, quantum information analysis~\cite{chen2023near, chen2021variational, zhao2021practical}, and quantum data compression~\cite{romero2017quantum, cao2021noise}. In previous works~\cite{wang2021variational, zeng2021variational}, it was proposed to diagonalize the Hamiltonian by VQA. Such a strategy can reduce the complexity of the expectation value of Hamiltonian because all the Pauli terms of the diagonalized Hamiltonian commute with each other, and they can be measured simultaneously. However, the VQAs to diagonalize the Hamiltonian are too costly. In the algorithm in Ref.~\cite{wang2021variational}, we have to estimate each eigenvalue of the Hamiltonian in every iteration. The non-zero eigenvalue of the Hamiltonian might increase exponentially with the system size, making the algorithm impossible. In Ref.~\cite{zeng2021variational}, the first step of the algorithm to diagonalize the Hamiltonian is preparing the Hamiltonian thermal state, which is a QMA-hard~\cite{wang2021Gibbs} problem. Thus, the algorithm in Ref.~\cite{zeng2021variational} is not practical either.

In this work, we propose a pre-processing algorithm to engineer the Hamiltonian, called variational quantum Hamiltonian engineering (VQHE), to minimize the Pauli norm of the engineered Hamiltonian. 
With the engineered Hamiltonian, the processing of Hamiltonian such as expectation value estimation and Hamiltonian simulation becomes easier to implement. 
Specifically, we use a parameterized unitary $U(\boldsymbol{\theta})$ to reduce the Pauli norm of a given Hamiltonian $H$, i.e., $\min\|U(\boldsymbol{\theta})HU(\boldsymbol{\theta})^\dagger\|_p$. 
We first develop the theory to convert the Pauli norm optimization problem into the state $l_1$-norm minimization problem. Then design an appropriate cost function and minimize it variationally. The VQHE algorithm outputs the engineered Hamiltonian, whose Pauli norm is minimized.
We then display how to apply the engineered Hamiltonian to the tasks of the expectation value estimation and the Hamiltonian simulation. Especially in the task of expectation value estimation, we emphasize that the engineered Hamiltonian is compatible with grouping strategy~\cite{kandala2017hardware,crawford2021efficient, gokhale2019minimizing,verteletskyi2020measurement}, which is another method to reduce the measurement time. The numerical experiments are conducted by applying the VQHE algorithms to the Ising Hamiltonian and some molecules' Hamiltonian, which shows the effectiveness of our algorithm. 
This work proposes a variational quantum algorithm to reduce the measurement complexity of expectation value estimation and the gate count of Hamiltonian simulation, making a significant contribution to improving the efficiency of quantum computing.

\section{Theoretical framework}\label{sec:theoretical_framework}
In this section, we are going to encode the Hamiltonian into a quantum state and prove that applying a unitary channel on a Hamiltonian is equivalent to applying a unitary gate on the Hamiltonian state.

For a given $n$-qubit Hamiltonian $H$, we can always decompose it into Pauli basis as 
\begin{equation}\label{eq:Pauli_decomposition}
    H = \sum_{i=1}^{4^n} h_i P_i, 
\end{equation}
where $h_i$ is the real Pauli coefficient and the $P_i\in\{I,X,Y,Z\}^{\ox n}$ is $n$-qubit Pauli tensor product. The problem of our task is minimizing the Pauli norm of Hamiltonian $H$ by optimizing the unitary $U$, which can be written as 
\begin{equation}\label{eq:prime_problem}
    P_{\rm opt}(H) = \min_{U} \|UHU^\dagger\|_p.
\end{equation}


On quantum computers, it is not convenient to apply a unitary channel $\cU$ on the Hamiltonian $H$ directly. For simplicity, our first step is to encode the Hamiltonian into a quantum state by the following Definition~\ref{def:vectorization}. 

\begin{definition}{\rm (Hamiltonian vectorization)}\label{def:vectorization} For a given $n$-qubit Hamiltonian $H=\sum_{i=1}^{4^n} h_i P_i$, it can be encoded into a $2n$-qubit state
    \begin{equation}
        \ket{H} = \frac{1}{\lambda} (h_1, h_2, \cdots, h_{4^n})^T,
    \end{equation}
where $\ket{H}$ is the Hamiltonian state, and Hamiltonian vector refers to the unnormalized Hamiltonian state, $T$ stands for transpose, and $\lambda$ refers to the normalization factor, i.e., $\lambda = \sqrt{\sum_{i=1}^{L=4^n} h_i^2}$. 
\end{definition}
For example, if the given Hamiltonian is $H=I+2X+3Y-4Z$, then the corresponding Hamiltonian state is $\ket{H} = \frac{1}{\sqrt{30}}(1, 2, 3, -4)^T$. The $l_1$-norm of the Hamiltonian state is $\|\ket{H}\|_1=10/\sqrt{30}$. The state initialization process can be efficiently executed on quantum device. It was proved in Ref.\cite{zhang2022quantum} that arbitrary $n$-qubit, $d$-sparse quantum state can be deterministically prepared with a circuit depth $\Theta(\log(nd))$ and $\cO(nd\log(d))$. Fortunately, the number of Pauli terms of the most meaningful Hamiltonian is polynomial, i.e., $\Tilde{\cO}(n^4)$. Thus the corresponding Hamiltonian state is $\Tilde{\cO}(n^4)$-sparse, implying it can be prepared efficiently.

The engineered Hamiltonian state can be directly represented as $\ket{H'} = \ket{UHU^\dagger}$. Such a state can be equivalently expressed as an \textit{encoded unitary} $V$ applied on the Hamiltonian state $\ket{H}$. The formal statement is shown in the following theorem.

\begin{theorem}\label{theo:Unitary}
    For a given Hamiltonian $H$ and unitary $U$, the vectorized engineered Hamiltonian is $\ket{H'} = \ket{UHU^\dagger}$, which is equivalent to applying a unitary gate $V$ on the Hamiltonian state
    \begin{equation}
        \ket{UHU^\dagger} = V\ket{H},
    \end{equation}
    with $V = \begin{bmatrix}
    \bra{c_1}\\
    \vdots\\
    \bra{c_{4^n}}
    \end{bmatrix}$, where $\ket{c_i}=\ket{U^\dagger P_i U}$ is the engineered Hamiltonian state of Pauli $P_i$.
\end{theorem}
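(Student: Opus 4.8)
The plan is to exploit the fact that the Pauli operators form an orthogonal basis of the space of $n$-qubit operators under the Hilbert--Schmidt inner product, $\tr(P_i P_j) = 2^n \delta_{ij}$. Because conjugation $H \mapsto UHU^\dagger$ is linear, it descends to a linear map on the coefficient vector, and the whole content of the theorem is to read off the matrix of that map in the Pauli basis and recognize its rows as the vectorizations $\bra{c_i}$. So the first thing I would do is invert the decomposition in Eq.~\eqref{eq:Pauli_decomposition}: for any operator $A=\sum_i a_i P_i$ the coefficients are recovered as $a_i = 2^{-n}\tr(P_i A)$.

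Applying this extraction formula to the engineered Hamiltonian $H'=UHU^\dagger$ and using cyclicity of the trace gives, for the new coefficients $h'_i$,
\begin{equation}
h'_i = \frac{1}{2^n}\tr\!\left(P_i\, U H U^\dagger\right) = \frac{1}{2^n}\tr\!\left(U^\dagger P_i U\, H\right).
\end{equation}
Substituting $H=\sum_j h_j P_j$ then writes each new coefficient as the linear combination $h'_i = \sum_j V_{ij}\, h_j$ with $V_{ij} = 2^{-n}\tr(U^\dagger P_i U\, P_j)$, which is already the matrix form $\ket{H'} = V\ket{H}$ at the level of coefficient vectors.

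The next step is to identify this $V$ with the one in the statement. Since $U^\dagger P_i U$ is Hermitian, its Pauli expansion has real coefficients, and its $j$-th coefficient is precisely $2^{-n}\tr(P_j U^\dagger P_i U) = V_{ij}$; hence the $i$-th row of $V$ is exactly the vectorization of $U^\dagger P_i U$, i.e.\ $\bra{c_i}$ with $\ket{c_i}=\ket{U^\dagger P_i U}$. A small but essential check is normalization, since Definition~\ref{def:vectorization} divides by $\lambda$. Here I would note that unitary conjugation preserves the Hilbert--Schmidt norm, so $\lambda=\sqrt{\sum_i h_i^2}$ is the same for $H$ and $H'$ and the common factor cancels consistently on both sides of $\ket{H'}=V\ket{H}$. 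Moreover each row $\ket{c_i}$ is automatically a unit vector, because $\sum_j V_{ij}^2 = 2^{-n}\tr(U^\dagger P_i U\, U^\dagger P_i U) = 2^{-n}\tr(P_i^2)=1$, which is what makes the notation $\ket{c_i}=\ket{U^\dagger P_i U}$ self-consistent with the normalized vectorization.

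Finally I would verify that $V$ is genuinely unitary, as the statement asserts, and not merely a linear map. This follows because $H\mapsto UHU^\dagger$ is an isometry of operator space in the Hilbert--Schmidt inner product, so its matrix in the orthonormal basis $\{2^{-n/2}P_i\}$ must be unitary; concretely one computes $\sum_j V_{ij}V_{kj} = 2^{-n}\tr(U^\dagger P_i U\, U^\dagger P_k U) = 2^{-n}\tr(P_i P_k)=\delta_{ik}$, collapsing by Pauli orthogonality, so $V$ is real orthogonal and hence unitary. I expect the only fiddly part to be the careful bookkeeping of the $2^n$ factors and confirming that the normalization constant $\lambda$ survives the conjugation; the algebraic core---coefficient extraction together with cyclicity of the trace---is routine.
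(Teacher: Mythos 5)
Your proof is correct and follows essentially the same route as the paper's: extracting the new coefficients via $h_i'=2^{-n}\tr(P_i\,UHU^\dagger)$, using cyclicity of the trace and Pauli orthogonality $\tr(P_iP_j)=2^n\delta_{ij}$ to identify the rows of $V$ with the vectorizations $\bra{c_i}$ of $U^\dagger P_iU$, and establishing unitarity through row orthonormality. Your explicit verification that the normalization factor $\lambda$ is conjugation-invariant (so it cancels on both sides) and that Hermiticity of $U^\dagger P_iU$ keeps the coefficients real is in fact slightly more careful bookkeeping than the paper provides, but it does not constitute a different argument.
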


\begin{proof}
For a given Hamiltonian $H$, we can decompose it into a linear combination of Pauli terms, i.e., $H=\sum_i h_i P_i$. The engineered Hamiltonian by unitary $U$ is $H' = UHU^\dagger = \sum_i h_i' P_i$. The engineered coefficients are obtained by following 
\begin{align}
    h_i' &= \frac{1}{2^n} \tr[UHU^\dagger P_i]\\
    &= \frac{1}{2^n} \tr[\sum_j h_j P_jU^\dagger P_i U]\\
    &= \frac{1}{2^n}\tr[\sum_j h_j P_j\sum_m c_{im} P_m]\\
    &= \frac{1}{2^n} \sum_{j,m} h_j c_{im}\tr[P_j  P_m]
\end{align}
Since $\tr[P_j P_m]=2^n\delta_{jm}$, where $P_j$ and $P_m$ are Pauli matrices, $n$ is number of qubits. Then,
\begin{align}
    h_i' &= \sum_{j} h_j c_{ij}=\langle c_i |H\rangle,
\end{align}
where $\ket{c_i}=\ket{U^\dagger P_i U}$ is the engineered Hamiltonian state of Pauli operator $P_i$. Here we can construct an operator $V$
\begin{equation}
    V = \begin{bmatrix}
    \bra{c_1}\\
    \bra{c_2}\\
    \vdots\\
    \bra{c_{4^n}}
    \end{bmatrix}
\end{equation}
such that 
\begin{equation}
    \ket{H'} = V\ket{H}.
\end{equation}

Since unitary $U$ preserves the $l_2$-norm, so \textbf{(1)} $\langle c_i|c_i\rangle=1$. 
Because $\tr[UP_iU^\dagger UP_jU^\dagger] = \tr[P_iP_j] = 0$ for $i \neq j$, then $\tr[\sum_m c_{im} P_m \sum_t c_{jt} P_t] = \sum_{m,t} c_{im} c_{jt} \tr[P_m P_t]= 0$, since $\tr[P_m P_t]= 2^n \delta_{mt}$, then it becomes $\sum_m c_{im} c_{jm}=0$, which implies \textbf{(2)} $\langle c_i |c_j \rangle =0, \forall i\neq j$
The two properties imply that the matrix $V$ is unitary.
\end{proof}

This theorem informs us that engineering Hamiltonian is equivalent to applying a unitary $V$ on the Hamiltonian vector. Specifically, from definition~\ref{def:vectorization}, the Hamiltonian state requires to normalize, so it is straightforward to have the relation between Pauli norm of Hamiltonian $\|H\|_p$ and the $l_1$-norm of Hamiltonian state 
\begin{equation}\label{eq:equivalence}
    \|UHU^\dagger\|_p = \lambda \|V\ket{H}\|_1.
\end{equation}
Since unitary preserves $l_2$-norm, meaning $\lambda$ is a constant, the problem of minimizing Pauli norm $\|H\|_p$ is equivalent to minimizing $l_1$-norm of Hamiltonian state $\|\ket{H}\|_1$, which is
\begin{equation}\label{eq:converted_problem}
    P_{\rm opt}(H) = \lambda\min_{V} \|V\ket{H}\|_1.
\end{equation}
Thus, when the optimal unitary $V$ is found to minimize the $l_1$-norm of $V\ket{H}$, it is equivalent to the optimal $U$ such that $\|UHU^\dagger\|_p$ is minimized. The specific relation between $U$ and $V$ is shown in Sec.~\ref{sec:Parameterized quantum circuit}.

Note $V$ is unitary (linear operation), so we can derive the following properties.
\begin{corollary}
The properties of operation $V$
    \begin{align}
    (1) \quad & \ket{U_2U_1HU_1^\dagger U_2^\dagger} = V_2V_1\ket{H}\\
    (2) \quad & (\alpha V_1 + \beta V_2)\ket{H} = \alpha V_1 \ket{H} + \beta V_2 \ket{H}\\
    (3) \quad & \ket{U_1\otimes U_2 H U_1^\dagger \otimes U_2^\dagger} = V_1\otimes V_2 \ket{H}
\end{align}
\end{corollary}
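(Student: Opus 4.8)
The plan is to treat the three properties separately, deriving each from Theorem~\ref{theo:Unitary}. The unifying observation I would record first is that the matrix element of the encoded unitary $V$ associated with $U$ is just the conjugation superoperator expressed in the Pauli basis, namely $V_{ij} = \frac{1}{2^n}\tr[P_i\, U P_j U^\dagger]$ (equivalently $V_{ij}=c_{ij}$ with $U^\dagger P_i U = \sum_m c_{im}P_m$, as established in the proof of Theorem~\ref{theo:Unitary}). All three parts follow from this identity.

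For property (1) I would simply apply Theorem~\ref{theo:Unitary} twice. Writing $H' = U_1 H U_1^\dagger$, the theorem gives $\ket{H'} = V_1\ket{H}$. Applying it again to $U_2$ acting on $H'$ yields $\ket{U_2 H' U_2^\dagger} = V_2\ket{H'} = V_2 V_1\ket{H}$, and since $U_2 H' U_2^\dagger = U_2 U_1 H U_1^\dagger U_2^\dagger$, the claim follows. No computation is needed beyond composing the two conjugations.

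Property (2) is immediate once we recall that $V_1$ and $V_2$ are genuine matrices acting on the vector $\ket{H}$: matrix--vector multiplication distributes over addition and commutes with scalars, so $(\alpha V_1 + \beta V_2)\ket{H} = \alpha V_1\ket{H} + \beta V_2\ket{H}$ by linearity. This requires nothing beyond the linear-algebraic content already contained in the statement that $V$ is a linear operator.

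Property (3) is where the real work lies. For the combined system of $n_1+n_2$ qubits I would index the full Pauli basis by pairs $(i_1,i_2)$ with $P_{(i_1,i_2)} = P_{i_1}\ox P_{i_2}$, and then use the matrix-element formula together with $(U_1\ox U_2)(P_{j_1}\ox P_{j_2})(U_1\ox U_2)^\dagger = (U_1 P_{j_1}U_1^\dagger)\ox(U_2 P_{j_2}U_2^\dagger)$ and $\tr[A\ox B] = \tr[A]\tr[B]$. The matrix element then factorizes as $(V_1)_{i_1 j_1}(V_2)_{i_2 j_2}$, which is exactly the $((i_1,i_2),(j_1,j_2))$ entry of $V_1\ox V_2$; hence the encoded unitary of $U_1\ox U_2$ is $V_1\ox V_2$, and Theorem~\ref{theo:Unitary} applied to $U_1\ox U_2$ delivers the result. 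The main obstacle is bookkeeping rather than content: I must verify that the vectorization of Definition~\ref{def:vectorization} is compatible with the tensor-product ordering, i.e.\ that $\ket{P_{i_1}\ox P_{i_2}}$ maps to $\ket{P_{i_1}}\ox\ket{P_{i_2}}$ so that the combined basis index $(i_1,i_2)$ is genuinely the Kronecker pairing of the subsystem indices. Once this labeling convention is fixed, the multiplicativity of the trace over tensor factors does all the remaining work.
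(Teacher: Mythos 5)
Your proposal is correct and takes essentially the same route as the paper: properties (1) and (2) are handled identically (two applications of Theorem~\ref{theo:Unitary}, then linearity of the matrix $V$), and for property (3) your matrix-element factorization $V_{(i_1,i_2),(j_1,j_2)} = (V_1)_{i_1 j_1}(V_2)_{i_2 j_2}$ via $\tr[A\ox B]=\tr[A]\tr[B]$ is just the entrywise restatement of the paper's column-wise identity $\ket{U_1 P_i U_1^\dagger \ox U_2 P_j U_2^\dagger} = V_1\vec{a}_i \ox V_2\vec{a}_j$, resting on the same Kronecker-compatibility of the vectorization that you explicitly flag. If anything, your version is slightly more careful in notation than the paper's, which writes the operator $UP_iU^\dagger$ where its vectorization $\ket{UP_iU^\dagger}$ is meant, but the underlying argument is the same.
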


\begin{proof}
    For property (1), we can denote the $\Tilde{H}=U_1 H U_1^\dagger$, then $\ket{U_2U_1 H U_1^\dagger U_2^\dagger} = \ket{U_2\Tilde{H} U_2^\dagger} = V_2\ket{\Tilde{H}}$. Apply the theorem again, $V_2\ket{\Tilde{H}} = V_2\ket{U_1 H U_1^\dagger} = V_2V_1\ket{H}$.
    For property (2), it is obtained straightforwardly from the linearity of unitary $V$.
    For property (3), suppose $H=\sum_{ij} h_{ij} P_i\ox P_j$, then we have 
    \begin{align}
        U_1\otimes U_2 H U_1^\dagger \otimes U_2^\dagger 
        = \sum_{ij} h_{ij} U_1 P_i U_1^\dagger\otimes U_2 P_j  U_2^\dagger
    \end{align}
    where $i,j \in \{1,2,3,4\}$ such that $P_i, P_j$ corresponds to Pauli operators $\{I,X,Y,Z\}$ respectively. From Theorem~\ref{theo:Unitary}, we have $UP_i U^\dagger = V \vec{a}_i$, where $\vec{a}_i$ refers to the basis vector, i.e., only position $i$ in $\vec{a}_i$ is 1 and other positions are all 0. In another word, $\ket{P_i} = \Vec{a}_i$. Correspondingly, $\ket{H} = \frac{1}{\sqrt{\sum_{ij} h_{ij}^2}}\sum_{ij} h_{ij} \vec{a}_i\ox \vec{a}_j$. Now, we have
    \begin{align}
        U_1\otimes U_2 H U_1^\dagger \otimes U_2^\dagger &= \sum_{ij} h_{ij} V_1 \vec{a}_i \otimes V_2  \vec{a}_j \\
        &= V_1\otimes V_2 \sum_{ij} h_{ij} \vec{a}_i \otimes \vec{a}_j.
    \end{align}
    If we take Hamiltonian vectorization on both sides, it becomes
    \begin{equation}
        \ket{U_1\otimes U_2 H U_1^\dagger \otimes U_2^\dagger} = V_1\otimes V_2 \ket{H}.
    \end{equation}
    Note the fact that unitary is $l_2$- norm preserving operation, so the normalization factors are canceled out in this equation.
\end{proof}

\section{Variational quantum algorithms}
\begin{figure*}[t]
    \centering
    \includegraphics[width=\textwidth]{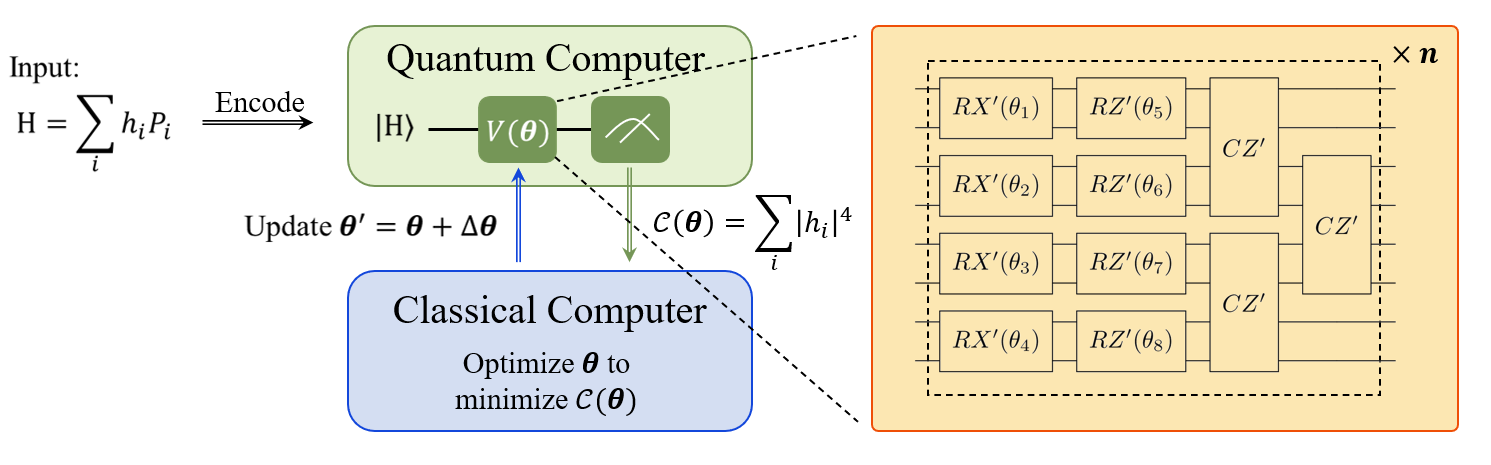}
    \caption{Diagram of the variational quantum algorithm for Hamiltonian engineering. Encode the given Hamiltonian $H$ into quantum state $\ket{H}$, then apply parameterized quantum circuit $V(\boldsymbol{\theta})$. Note that the rotation gates $RX'(\theta), RZ'(\theta)$, and control-Z gates $CZ'$ are transferred gates, the exact circuit representation is shown in Fig.~\ref{fig:gates}. Measuring the circuit and calculating the cost function, which is the $l_1$-norm of the quantum state, the classical computer optimizes the cost function by updating the parameters in the PQC. After iterations, the cost function converges to the optimized value.}
    \label{fig:main_diagram}
\end{figure*}

It is well-known that due to the non-convexity of unitary, it is very hard to obtain the optimal unitary to satisfy our requirement as shown in Eq.~\eqref{eq:converted_problem}. Here we propose to apply the variational quantum algorithm (VQA) to approach the optimal unitary. The workflow of our algorithm is shown in Fig.~\ref{fig:main_diagram}. At the beginning, we encode the Hamiltonian into a quantum state. Randomize the parameters in the PQC, and estimate the cost function by measurement. Then we use a classical computer to optimize the parameters in the PQC and update the parameters correspondingly. Repeat this process until the cost function is minimized. 

In this section, we are going to present the details of the cost function and the choice of PQC in the algorithm design. The specific algorithm is also provided as shown in Algorithm~\ref{algo:VQHE}.

\subsection{Cost function}
We have shown that the complexities of many tasks depend on the Pauli norm of the Hamiltonian, such as expectation value estimation~\cite{crawford2021efficient} and Hamiltonian simulation~\cite{campbell2019random}. We then convert the problem of minimizing the Pauli norm (Eq.~\eqref{eq:prime_problem}) to the problem of minimizing the $l_1$-norm of Hamiltonian vector (Eq.~\eqref{eq:converted_problem}). It is natural to consider the $l_1$-norm of the corresponding Hamiltonian state as the cost function, i.e., $\|\ket{H}\|_1$. However, estimating the $l_1$-norm of a quantum state is generally hard on quantum devices. Fortunately, there is a quantum algorithm that can efficiently estimate the sum of the fourth power of elements in the Hamilton vector $\ket{H}$, i.e.,
\begin{equation}
    Q = \sum_{i=0}^{d-1}|\langle i|H \rangle|^4 = \sum_{i=0}^{d-1}|h_i|^4
\end{equation}
where $d$ is the dimension of the system. The detail of the algorithm to estimate the quantity $Q$ is shown in the Appendix~\ref{app:estimate_Q}. This algorithm is also used in inverse participation ratio estimation~\cite{liu2024quantum}. 
Both $l_1$-norm and $Q$ describe the uncertainty of a quantum state. From this point of view, minimizing $l_1$-norm and maximizing $Q$ have the same optimizing direction, which reduces the uncertainty of a quantum state. Thus, we set $Q$ as the cost function, i.e., 
\begin{equation}\label{eq:cost_f}
    \cC = Q = \sum_{i=0}^{d-1}|h_i'|^4 
\end{equation}
Although minimizing $l_1$-norm and maximizing $Q$ is not exactly the same problem, the effectiveness of setting $Q$ as the cost function emerges in the numerical experiment in Sec.~\ref{sec:numerial_experiments}.

\subsection{Parameterized quantum circuit}\label{sec:Parameterized quantum circuit}
\begin{figure}[h!]
    \centering
    \includegraphics[width=0.5\textwidth]{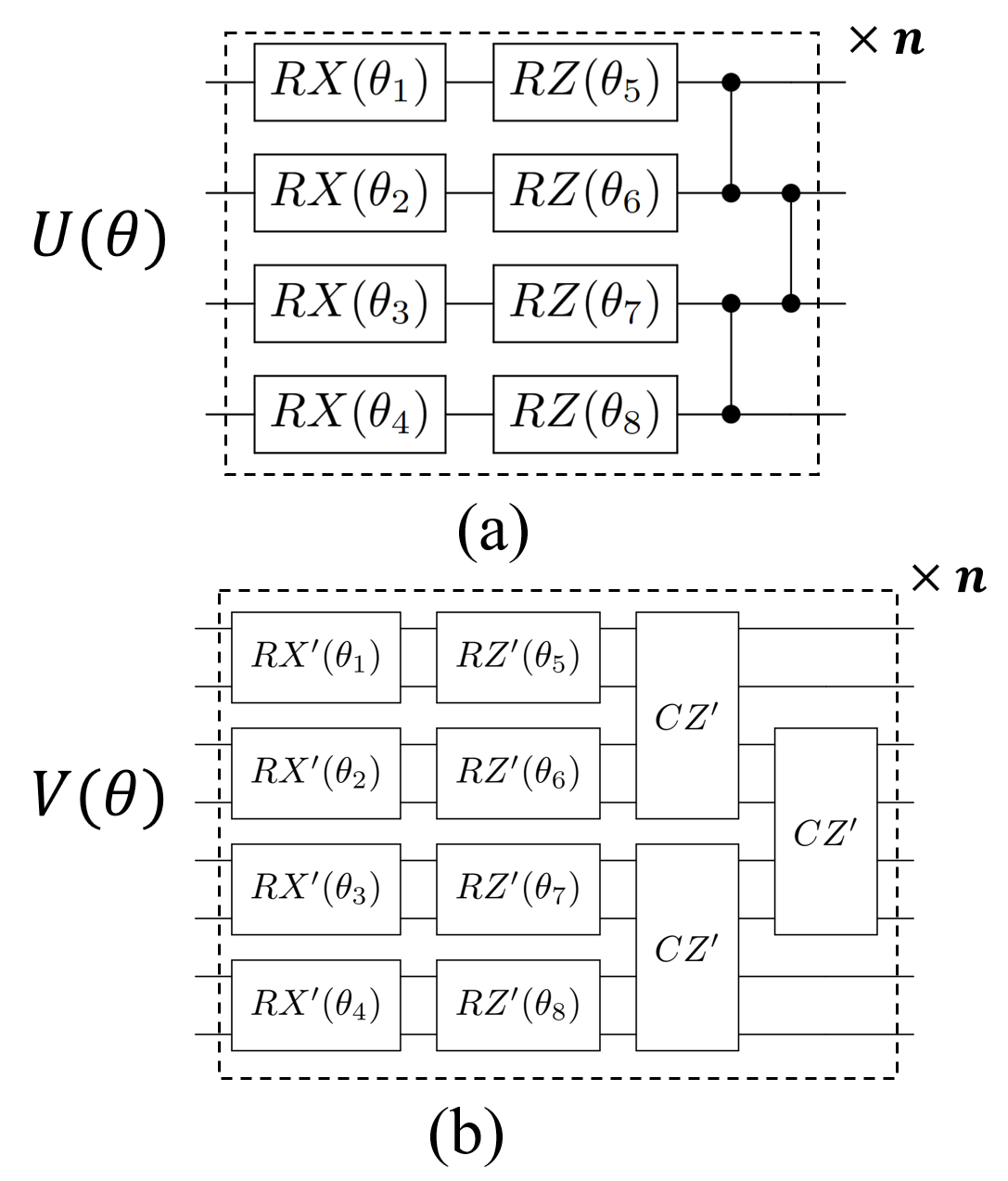}
    \caption{Structure of one entangled layer. (a) is the ansatz $U(\bm{\theta})$ for training Hamiltonian $H$. (b) is the 
    ansatz for training Hamiltonian state $\ket{H}$. The circuits for RX', RY', and CZ' are displayed in Appendix~\ref{app:Gate_correspondence}.}
    \label{fig:ansatz}
\end{figure}

From Theorem~\ref{theo:Unitary}, the encoded unitary $V$ is constructed from the unitary $U$, so $U$ and $V$ have a one-to-one correspondence relation. We can consider unitary $V$ as a function of unitary $U$, i.e., $V = f(U)$. The function $f$ maps an $n$-qubit unitary $U$ into a $2n$-qubit unitary $V$. Here we take the hardware-efficient ansatz for $U$ as shown in Fig.~\ref{fig:ansatz} (a), and we automatically obtain the corresponding encoded ansatz $V$ as shown in Fig.~\ref{fig:ansatz} (b). The ansatz for $U$ contains single qubit rotation gates rotation-X (RX), rotation-Z (RZ), and entangled gates control-Z (CZ). The corresponding encoded unitary components $V$ are denoted as RX', RZ', and CZ', whose detailed information is shown in Appendix~\ref{app:Gate_correspondence}. The encoded unitaries are all real, thus there is no need to worry about the imaginary element in the optimized Hamiltonian state. There is one thing we need to note that $U(\boldsymbol{\theta})$ and $V(\boldsymbol{\theta})$ share the same parameters $\boldsymbol{\theta}$. Once the parameters optimization in unitary $V(\boldsymbol{\theta})$ is done, we can obtain the unitary $U(\boldsymbol{\theta})$ directly by inserting the parameters $\boldsymbol{\theta}$ into $U$. The unitary $U$ is necessary when we apply the algorithm to expectation value estimation and Hamiltonian simulation, which will be discussed in Sec.~\ref{sec:application}.

\begin{algorithm}[H]
    \renewcommand{\algorithmicrequire}{\textbf{Input:}}
    \renewcommand{\algorithmicensure}{\textbf{Output:}}
    \begin{algorithmic}[1]
        \REQUIRE Classical description of $n$-qubit Hamiltonian $H=\sum_i h_i P_i$, parameterized quantum circuit (PQC) $V(\boldsymbol{\theta})$, number of iterations ITR,
        \ENSURE The optimized parameters $\boldsymbol{\theta}^*$, engineered Hamiltonian state $\ket{H'}$
        \STATE Calculate the normalization factor of the Pauli coefficients $\lambda=\sum_i |h_i|$
        \STATE Encode Hamiltonian $H$ into quantum state $\ket{H}$
        \STATE Randomly initialize parameters $\theta$
        \FOR {itr = 1, $\dots$, ITR}
            \STATE Apply $V(\boldsymbol{\theta})$ to the Hamiltonian state$\ket{H}$
            \STATE Apply the estimation algorithm (Appendix~\ref{app:estimate_Q}) to estimate the cost function in Eq.~\eqref{eq:cost_f}
            \STATE Maximize the cost function $\cC$ and update parameter $\boldsymbol{\theta}$
        \ENDFOR
        \STATE Output the optimized parameters $\boldsymbol{\theta^*}$, and the engineered Hamiltonian state $\ket{H'}$
    \end{algorithmic}
\caption{Variational quantum Hamiltonian engineering (VQHE)}
\label{algo:VQHE}
\end{algorithm}
\section{Application}\label{sec:application}
In this section, we are going to apply the proposed algorithm to reduce the measurement time in the task of expectation value estimation and the gate count in the task of Hamiltonian simulation, respectively. We especially emphasize that the proposed VQHE is compatible with grouping, which is another method to reduce the measurement complexity. The measurement complexity can be reduced for a further step. We also propose a partition trick to tackle the issue of scalability. 

To apply the engineered Hamiltonian $H'$ to the tasks of expectation value estimation and Hamiltonian simulation, we must first obtain the engineered Pauli coefficients $h_i'$. Quantum state tomography~\cite{nielsen2010quantum} is a candidate to get such information from the engineered Hamiltonian state $\ket{H'}$. However, the cost of state tomography increases exponentially with respect to the system size. Fortunately, the engineered Hamiltonian state $\ket{H'}$ is optimized by reducing its uncertainty, hence the complex amplitude is expected to be concentrated on a certain computational basis. In other words, we expect the engineered Hamiltonian state $\ket{H'}$ to be a sparse pure state. In Ref.~\cite{gulbahar2021k}, a more efficient algorithm than quantum state tomography was proposed to reconstruct the sparse pure state. With the engineered Hamiltonian coefficients $h_i'$, the $l_1$-norm of the engineered Hamiltonian norm can be calculated $\|H'\|_p=\sum_i |h_i|$ straightforwardly.

\subsection{Expectation value estimation}
In quantum computing, estimating the expectation value of a Hamiltonian $H$ to a quantum state, i.e., $\langle H\rangle =\bra{\psi}H \ket{\psi}$, is one of the most fundamental processes in many tasks. The total measurement time is determined by the Pauli norm, i.e., $N\sim \|H\|_p$. In order to reduce the measurement time, we can apply the proposed VQHE algorithm first as shown in Algorithm~\ref{algo:VQHE} as a pre-processing, which returns us the engineered Hamiltonian state $\ket{H'}$ and the optimized parameters $\boldsymbol{\theta^*}$. Then estimate the engineered Pauli coefficients $h_i'$ from the engineered Hamiltonian state $\ket{H'}$. The optimized Hamiltonian norm is obtained by summing up all the absolute values of the coefficients, i.e., $\|H'\|_p=\sum_i {p_i}$.

When we estimate the expectation value of Hamiltonian $H$, we have
\begin{equation}
    \bra{\psi}H\ket{\psi} = \bra{\psi}U^\dagger H' U\ket{\psi}.
\end{equation}
where $H' = UHU^\dagger$ is the engineered Hamiltonian. Here we set the unitary $U$ as $U(\boldsymbol{\theta^*})$, where $U(\boldsymbol{\theta^*})$ is the Ansatz shown in Fig.~\ref{fig:ansatz} (a), and the parameters $\boldsymbol{\theta^*}$ are optimized by Algorithm~\ref{algo:VQHE}.
In this case, the corresponding measurement time of estimating Hamiltonian $H$ reduces to $N\sim\|U(\boldsymbol{\theta}^*) H U^\dagger(\boldsymbol{\theta}^*)\|_p$.

The measurement time can be reduced further by applying a grouping algorithm to the engineered Hamiltonian. Grouping~\cite{kandala2017hardware, gokhale2019minimizing, verteletskyi2020measurement, crawford2021efficient} is an efficient classical algorithm to reduce the measurement time in the task of expectation value estimation. The main idea of grouping is dividing the Pauli terms into collections, such that all the Pauli elements in a collection commute with each other, thus they can be measured simultaneously. Many grouping methods have been proposed, such as qubit-wise commuting (QWC)~\cite{kandala2017hardware}, general commuting (GC)~\cite{gokhale2019minimizing, verteletskyi2020measurement}, and sorted insertion~\cite{crawford2021efficient}. Here we focus on the sorted insertion strategy, which is described as follows, and the details of QWC and GC can be found in the Appendix~\ref{app:Measurement_time_of_grouping}. 

For a given Hamiltonian as shown in Eq.~\eqref{eq:Pauli_decomposition}, the set $\{(h_i, P_i)\}_{i=1}^{4^n}$ is sorted by the absolute value of coefficients $h_j$ so that $|h_1|\ge|h_2|\ge\cdots\ge|h_{4^n}|$. Then, in the order $i = 1, \cdots 4^n$, it is checked whether $P_i$ commutes with all elements in an existing collection. If it does, it is added to that collection. If not, a new collection is created and $h_i P_i$ is inserted there. The collections are checked in order of their creation. After division, there are $N$ collections in total, and the $i$-th collection has $m_i$ Pauli terms. Here denote the grouped Pauli norm as $\|H\|_{gp} =\sum_{i=1}^N\sqrt{\sum_{j=1}^{m_i} |h_{ij}|^2}$, which is highly dependent on the group strategy. The total measurement time of the grouped collections is proportional to the square of the which is $N\sim \|H\|_{gp}$~\cite{crawford2021efficient}. The details and examples are shown in the Appendix~\ref{app:Grouped_Pauli_Norm}. If we apply the VQHE algorithm followed by the sorted grouping algorithm, the corresponding measurement time will be 
\begin{equation}
    N\sim \|U(\boldsymbol{\theta}^*) H U^\dagger(\boldsymbol{\theta}^*)\|_{gp}.
\end{equation}

\subsection{Hamiltonian simulation with qDrift}
Consider a Hamiltonian $H=\sum_{j=1}^Lh_j H_j$, where $H_j$ is hermitian and normalized, $h_j$ is the weight. For each $H_j$, the unitary $e^{-i\tau H_j}$ can be implemented on quantum computers for any $\tau$. In the Trotter formulae, one divides $U=e^{-itH}$ into segments so that $U = U^r_r$ with $U_r = e^{-itH/r}$ and uses $V_r=\prod_{i=1}^Le^{-ith_jH_j/r}$ to approach $U_r$ in the large $r$ limit. Repeat this process for $r$ times such that $V_r^r \rightarrow U$ in the large $r$ limit. It has been proven in Refs.~\cite{childs2019faster, childs2018toward} that the total gate count is $G\sim L^3(\Lambda t)^2/2\epsilon$, where $\Lambda:=\max_j h_j$ is the magnitude of the strongest term in the Hamiltonian, $\epsilon$ is the error tolerance. The gate count of this method depends on the terms of decomposition, which is not practical for electronic structure Hamiltonians due to its large number of terms.

The quantum stochastic drift protocol (qDrift)~\cite{campbell2019random} was proposed to solve this problem. Each unitary in the sequence is chosen independently from an identical distribution. Denote $\gamma=\sum_i h_j$. The strength $\tau_j$ of each unitary is fixed $\tau_j=t\gamma/G$, which is independent of $h_j$, so we implement gates of the form $e^{-i\tau H_j}$. Then we randomly choose the unitary $e^{-i\tau H_j}$ with probability $p_j = h_j/\gamma$. The full circuit is described by an ordered list of $j$ values $\bm{j}=\{j_1,j_2,\cdots,j_N\}$ that corresponds to unitary 
\begin{equation}
    V_{\bm{j}}=\prod_{k=1}^G e^{-i\tau H_{j_k}}.
\end{equation}
which is selected from the product distribution $P_{\bm{j}}
=\gamma^{-G}\prod_{k=1}^G h_{j_k}$. The gate count of Hamiltonian simulation by qDrift method is $G\sim \cO((\gamma t)^2/\epsilon)$, where $\gamma$ is the Pauli norm of Hamiltonian. 

Before simulating a given Hamiltonian $H$, we can first apply the VQHE algorithm to engineer the Hamiltonian such that the Pauli norm of the engineered Hamiltonian $H'=U(\boldsymbol{\theta}^*)HU^\dagger(\boldsymbol{\theta}^*)$ is reduced. Instead of simulating the original Hamiltonian, we can simulate the engineered Hamiltonian. Because of Eq.~\eqref{eq:Hamiltonian_sim_convert} 
\begin{equation}\label{eq:Hamiltonian_sim_convert}
    e^{-iHt} = U^\dagger(\boldsymbol{\theta}^*) e^{-iH't}U(\boldsymbol{\theta}^*),
\end{equation}
we need to sandwich the Hamiltonian evolution by unitaries. The corresponding gate count of simulating the engineered Hamiltonian is
\begin{equation}
    G\sim \cO((\gamma't)^2/\epsilon) + \Tilde{\cO}(U(\boldsymbol{\theta}^*)),
\end{equation}
where $\gamma'=\|U(\boldsymbol{\theta}^*) H U^\dagger(\boldsymbol{\theta}^*)\|_p$, and $\Tilde{\cO}(U(\boldsymbol{\theta}^*))$ is the gate count of the unitary, which can be ignored because it is a constant. The proof of Eq.~\eqref{eq:Hamiltonian_sim_convert} is shown in Appendix~\ref{app:proof_of_eq}.

\subsection{Scalability}
One obstacle to variational quantum algorithms is barren plateaus~\cite{mcclean2018barren,marrero2021entanglement, zhang2023statistical}, which refers to the phenomenon that the gradient decreases exponentially with the increase of the quantum system. To solve this problem, we propose a straightforward trick called \textit{partition trick}. Briefly speaking, we can divide the Hamiltonian into parts, where they have common factors. For each part, we extract the common factors and apply the VQHE to the rest Hamiltonian. We take the Hamiltonian $H=3XIXYYZ+2XIXZYI-XIXIIZ+IZXIYI+2IZZIYI$ as an example. We can first divide the Hamiltonian into two parts, and extract the common factor, thus we have $H_1= XIX\ox(3YYZ+2ZYI-IIZ)$ and $H_2 = (IZX+2IZZ)\ox IYI$. Then we apply VQHE to minimize $3YYZ+2ZYI-IIZ$ and $IZX+2IZZ$, respectively. The engineered Hamiltonian will be $H'=H_1' + H_2'$ with $H_1'=I\ox U_1 \cdot H_1 \cdot I\ox U_1^\dagger$ and similarly $H_2'= U_2\ox I \cdot H_2 \cdot U_2^\dagger\ox I$.

The partition trick is compatible with estimating expectation value. When we estimate the expectation value, we have
\begin{align}
    \bra{\psi} H \ket{\psi} = \sum_i \bra{\psi} H_i \ket{\psi} = 
    \sum_i \bra{\psi} U^\dagger_i H_i' U_i \ket{\psi}
\end{align}
where $H_i$ is the partite sub-Hamiltonian, and $H_i'=U_i H_i U_i^\dagger$ is the engineered sub-Hamiltonian. We should estimate the expectation values of each sub-Hamiltonian first, then sum them up. In such a case, the measurement time is $N\sim \sum_i\|H_i'\|_p$. In the task of Hamiltonian simulation, Eq.~\eqref{eq:Hamiltonian_sim_convert} doesn't hold anymore, for the unitaries $U_i$ are not the same for different partition Hamiltonian $H_i$. The partition trick is not compatible with Hamiltonian simulation.

The partition trick gets rid of barren plateaus by dividing a big optimization problem into several small ones, which helps us find the effective unitary to reduce the Pauli norm. However, such a trick will also reduce the expressibility of the ansatz, making us unlikely to obtain the optimal unitary to engineer Hamiltonian. This trick can be understood as a trade-off between effectiveness and optimality.

\subsection{Numerical experiments}\label{sec:numerial_experiments}
In this part, we are going to show the effectiveness of the proposed VQHE, which can reduce the Pauli norm of Hamiltonian. Specifically, we apply VQHE to the Ising Hamiltonian and the molecules' Hamiltonian. 

Ising model, a mathematical model of ferromagnetism in statistical mechanics, involves representing a material as a lattice of discrete points, where each point or "spin" can be in one of two possible states, typically denoted as "up" or "down." The spins interact with their nearest neighbors, and the interactions are characterized by a coupling constant. The Hamiltonian of the Ising model describes the energy of the system and is given by the sum of the interactions between neighboring spins. The basic form of the Hamiltonian for a one-dimensional Ising model is $ H = -J_{ij} \sum_{\{i,j\}} \sigma_i\sigma_j - g_k \sum_k \sigma_k$ where $\{i,j, k\}$ refers to the sit.

In the first numerical experiment, we apply the VQHE algorithm to the Ising model with different system sizes and compare the original Pauli norm and the engineered Pauli norm. For convenience, we set the coefficients in the Ising Hamiltonian to be the same, i.e., $J_{ij}=g_k=1$. In this experiment, we consider two Hamiltonians which are \textit{neighbor Ising Hamiltonian} $H_{\text{nei}}$ ($j=i+1$) and \textit{all-to-all Ising Hamiltonian} $H_{\text{all}}$ ($i$ and $j$ are independent), respectively. Specifically, we have
\begin{align}
    H_{\text{nei}} = -\sum_{\{i,i+1\}} Z_iZ_{i+1} + \sum_k X_k\\
    H_{\text{all}} = -\sum_{i,j} Z_iZ_j + \sum_k X_k,
\end{align}
where $X$ and $Z$ refer to Pauli-X and Pauli-Z operators, respectively. From the numerical results (Fig.~\ref{fig:Pauli_norm_of_ising}), the engineered Hamiltonian shows advantages over the original Hamiltonian with smaller Pauli norms in both cases of neighbor Ising Hamiltonian and all-to-all Ising Hamiltonian.

\begin{figure}[h]
    \centering
    \includegraphics[width=0.5\textwidth]{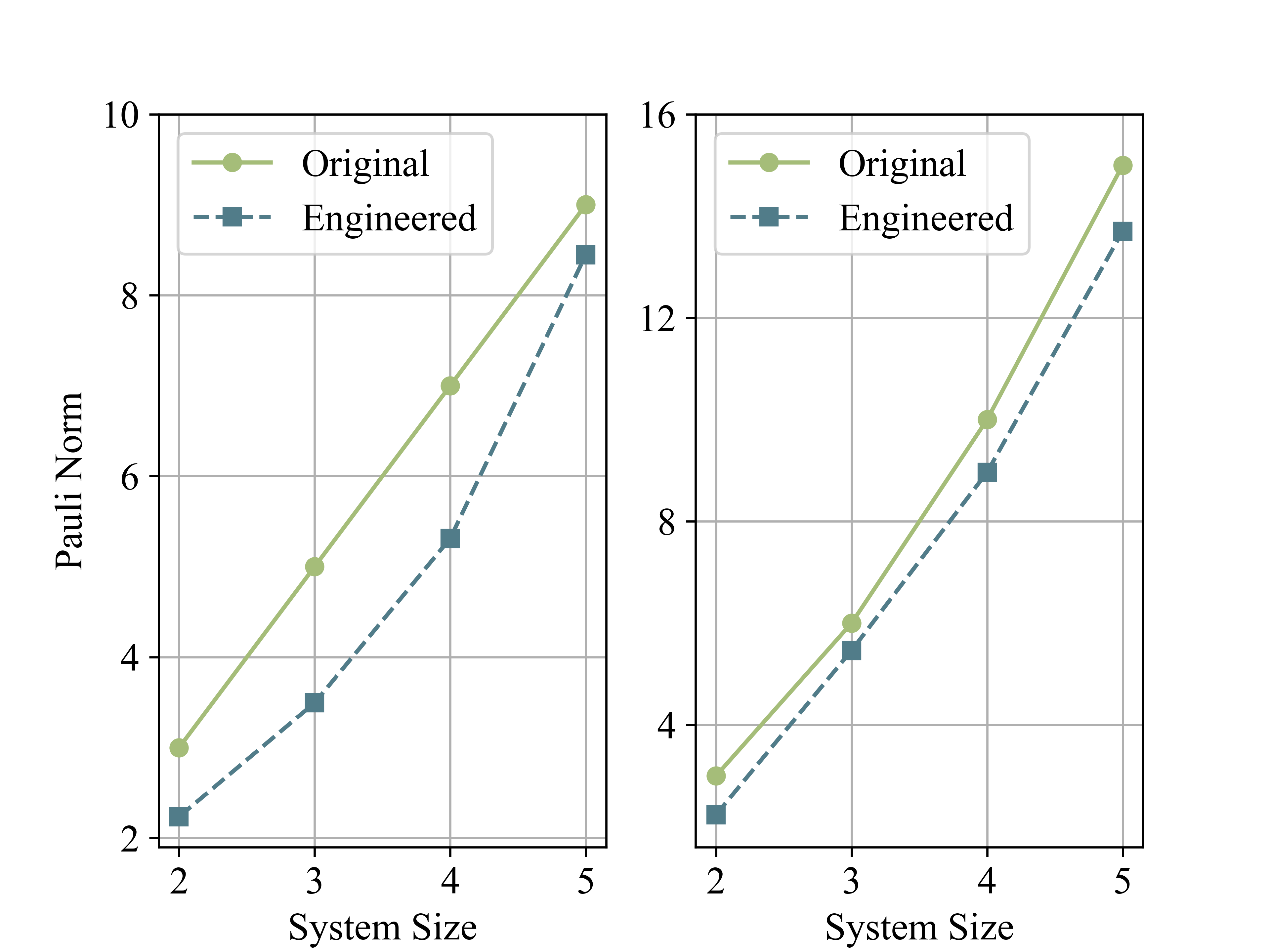}
    \caption{The Pauli norm comparison of different sizes of neighbor Ising Hamiltonian (left) and all-to-all Ising Hamiltonian (right). The solid line stands for the original Pauli norm, and the dashed line is the engineered Pauli norm.}
    \label{fig:Pauli_norm_of_ising}
\end{figure}

We also conduct numerical experiments on more complicated systems, including $H_2$, $HeH^+$, $H_3^+$, and $LiH$ molecules. In this experiment, we apply VQHE to the tapered Hamiltonian~\cite{bravyi2017tapering, setia2020reducing} of molecules for simplicity. The qubit tapering approach encodes the original quantum system into a smaller system, which preserves the ground state and ground energy. Thus, the tapered Hamiltonian reduces the requirement of the number of qubits, making the variational quantum eigensolver (VQE)~\cite{kandala2017hardware} more efficient. By applying VQHE to the tapered Hamiltonian of molecules, one observes the engineered Hamiltonian has a smaller Pauli norm compared with the original one. For a further step, we also take the sorted insertion grouping strategy into consideration. Specifically, we engineer the Hamiltonian with VQHE first, then apply the sorted insertion grouping method to the engineered Hamiltonian. The numerical results in Fig.~\ref{fig:Pauli_norm_of_molecules} show that the engineered grouped Pauli norm of Hamiltonian $\|H'\|_{gp}$ can be reduced further compared with that of engineered Hamiltonian $\|H'\|_{p}$ and that of grouped Hamiltonian $\|H\|_{gp}$. These results emphasize that the proposed VQHE is compatible with the grouping strategies.

\begin{figure}[h]
    \centering
    \includegraphics[width=0.5\textwidth]{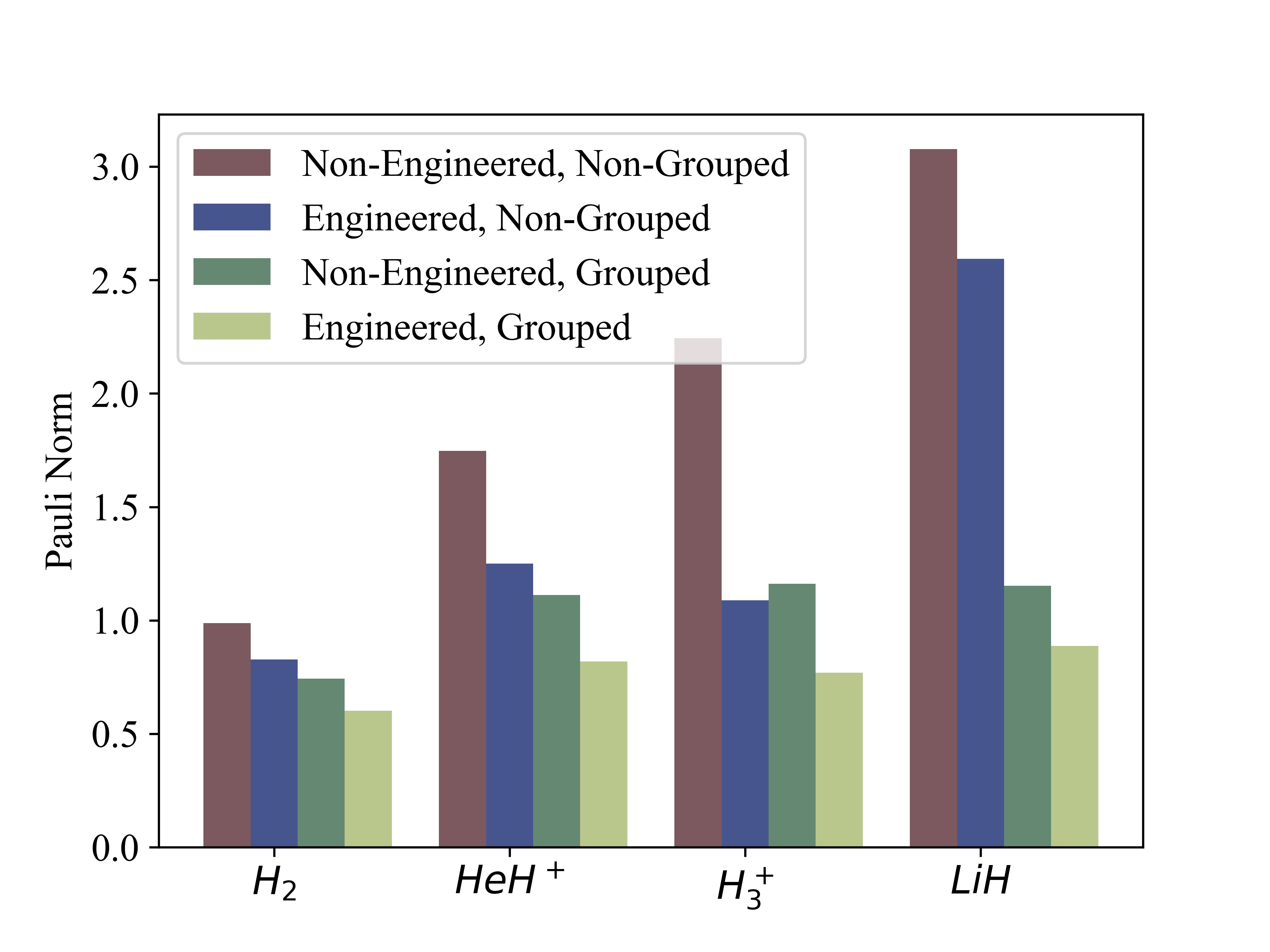}
    \caption{The Pauli norm of different Hamiltonian of molecules, including $H_2$~\cite{kandala2017hardware}, $HeH^+$, $H_3^+$~\cite{aguado2021three} and $LiH$~\cite{kandala2017hardware}. We compare the Pauli norm of the original Pauli norm, engineered Pauli norm, grouped Pauli norm~\cite{crawford2021efficient}, and engineered grouped Pauli norm}
    \label{fig:Pauli_norm_of_molecules}
\end{figure}

\section{Conclusion}
In this work, we propose the VQHE algorithm to reduce the Pauli norm of Hamiltonian such that the overheads for expectation value and Hamiltonian simulation can be reduced. We first develop the theory to convert the Pauli norm optimization problem into the vector $l_1$-norm minimization problem, and then design the cost function and parameterized quantum circuits to minimize Pauli norm variationally. We then display how to apply the proposed VQHE algorithm to expectation value estimation and Hamiltonian simulation. In the task of expectation value estimation, we also emphasize that the proposed algorithm is compatible with grouping, such that the measurement time can be reduced further. The numerical experiments are conducted by applying the VQHE algorithms to the Ising Hamiltonian and some molecules’ Hamiltonian, which shows the effectiveness of VQHE algorithm. 

For further research, it would also be fun to find the optimized Pauli norm of Hamiltonian we could obtain. It will also be interesting to apply the proposed algorithm VQHE to other applications in quantum computing. Besides, this framework may also be applied to tensor networks, empowering classical computers to solve quantum problems.


\textbf{Acknowledgments.--} B.Z would like to thank Hideaki Hakoshima,  Xuanqiang Zhao, and Xin Wang for their fruitful discussion. This work is supported by MEXT Quantum Leap Flagship Program (MEXT Q-LEAP) Grant No. JPMXS0118067394 and JPMXS0120319794, and JST COINEXT Grant No. JPMJPF2014.

\bibliography{references}

\clearpage

\vspace{2cm}
\onecolumngrid
\vspace{2cm}
\begin{center}
{\textbf{\large Supplementary Material for \\variational quantum Hamiltonian engineering}}
\end{center}

\appendix

\section{Quantum algorithm for estimating $Q$}\label{app:estimate_Q}
We denote the summation of the fourth power of elements in an $n$-qubit pure state $\ket{\psi}=(h_0, \dots, h_{d-1})^T$ as $Q$, which is 
\begin{equation}
    Q = \sum_{i=0}^{d-1}|\langle i|\psi\rangle|^4 =  \sum_{i=0}^{d-1} |h_i|^4
\end{equation}
where $\ket{i}$ refers to the computational basis. There exists an efficient quantum algorithm to estimate the quantity $Q$, which is shown in Fig.~\ref{fig:Q_algorithm}.

\begin{figure*}[htbp]
    \centering
    \includegraphics[width=0.7\textwidth]{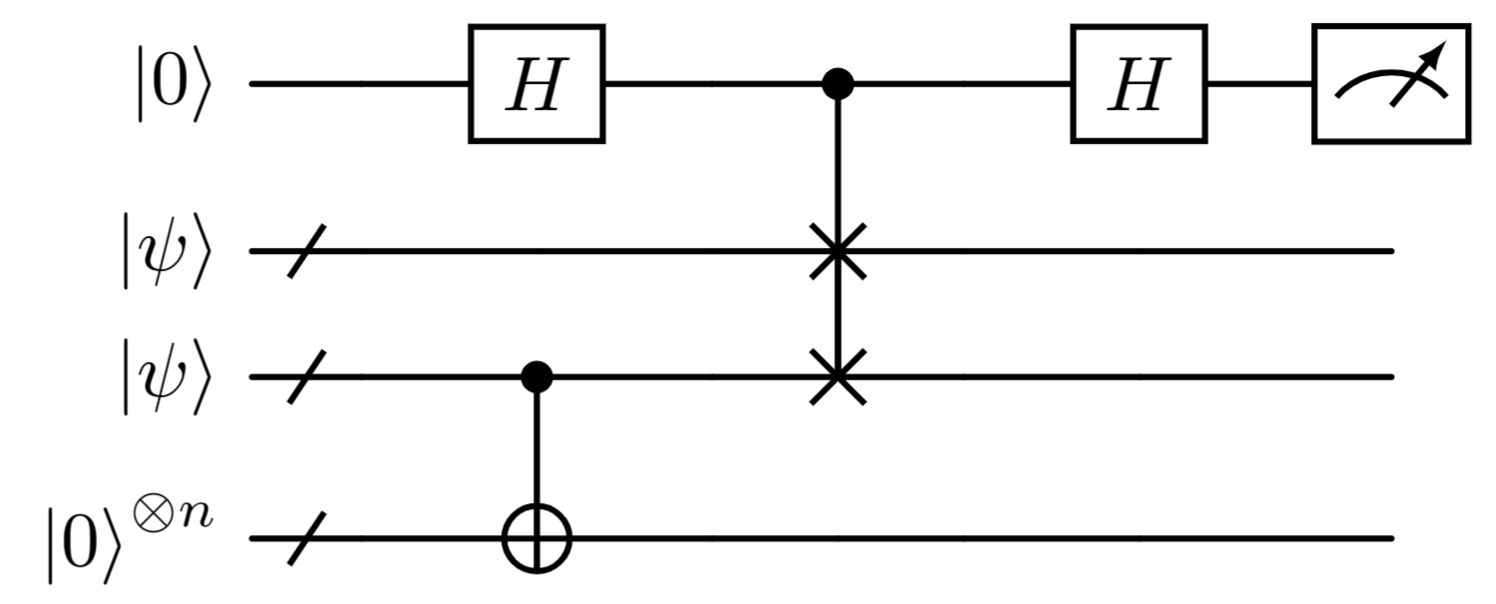}
    \caption{Quantum circuit for estimating $Q$. The quantum circuit consists of $n$ CNOT gates and $n$ controlled-SWAP gates. $Q$ is estimated through single-qubit measurements on the top qubit.}
    \label{fig:Q_algorithm}
\end{figure*}

The first step of the algorithm is to initialize the quantum state. We need four registers in total. The first register is called the measurement register, which is initialized as $\ket{0}$. The second and third registers are prepared as the $n$-qubit quantum state $\ket{\psi}$. The last register is prepared as $n$-qubit zero state $\ket{0}^{n}$. At the beginning, we have

\begin{align}
\ket{\Psi} = & \ket{0} \otimes |\psi\rangle \otimes |\psi\rangle \otimes |0\rangle^{ \otimes n}, \notag\\
= & \sum_{i} c_i |0\rangle \otimes | \psi \rangle \otimes |i\rangle \otimes |0\rangle^{ \otimes n}.
\end{align}

Apply CNOT gates on the third and fourth register, we then have 

\begin{align}\label{Eq4}
CNOT_{3,4} |\Psi_{0}\rangle
=\sum_{i} c_i |0\rangle \otimes | \psi \rangle \otimes |i\rangle \otimes |i\rangle.
\end{align}

At this step, we can discard the fourth register and the quantum state on the third register becomes a mixed state $\rho = \sum_{i} |c_i|^{2} |i\rangle\langle i|$. Take the SWAP test on the first three registers, the probability of getting the measurement as +1 is 
\begin{equation}\label{Eq5}
    P_{+1} = \frac{1}{2} + \frac{\tr[\rho\proj{\psi}]}{2} = \frac{1}{2} + \frac{Q}{2},
\end{equation}
where $Q=\sum_{i=0}^{d-1} |h_i|^4$ is the desired value. 

In Ref.~\cite{liu2024quantum}, A similar algorithm is also used in inverse participation ratio estimation.
\clearpage
\section{Gate correspondence}\label{app:Gate_correspondence}
\begin{figure*}[htbp]
    \centering
    \includegraphics[width=\textwidth]{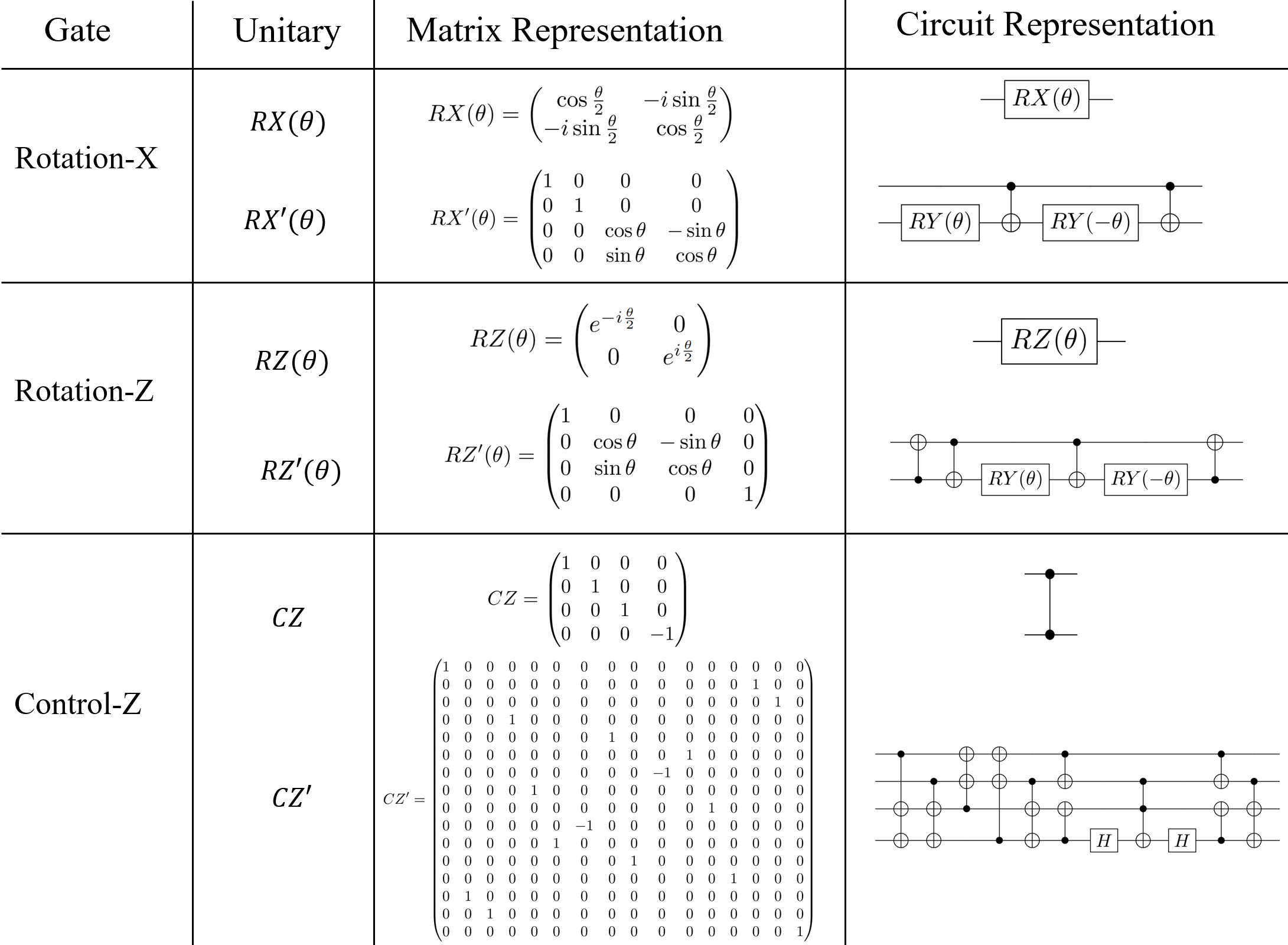}
    \caption{Unitary gates $U$ and the corresponding encoded unitary $V$}
    \label{fig:gates}
\end{figure*}

\section{Measurement time of grouping}~\label{app:Measurement_time_of_grouping}
In a normal distribution approximation, we require a variance of $\epsilon^2$ in the estimator $\langle H \rangle$. Then, the ground energy can be measured, 
\begin{equation}
    \langle H \rangle = \bra{\phi} H \ket{\phi} = \sum_i^L h_i \bra{\phi} H_i \ket{\phi} = \sum_i^L h_i \langle H_i \rangle.
\end{equation}
However, this straightforward method requires a huge amount of measurement time, making it infeasible to estimate the ground energy in large system. The standard error on the mean energy $\langle H \rangle$ after taking $S_i$ samples for each Pauli operator is 
\begin{equation}\label{eq:margin_of_error}
    \epsilon = \sqrt{\sum_i^L \frac{|h_i|^2 \text{Var}[H_i]}{S_i}}.
\end{equation}
If the measurement time of each term are the same $S_i= S$ \cite{kandala2017hardware, mcclean2016theory}, then the total measurement time is
\begin{equation}\label{eq:measure_times_independent}
    N_{measure} = ST = T\cdot\sum_i^L  \frac{|h_i|^2 \text{Var}[H_i]}{\epsilon^2}\le \frac{T^2|h_{max}^2|}{\epsilon^2}.
\end{equation}
where $h_{max} = \max_i|h_i|$ is the maximum coefficient.

If we choose $S_i \sim |h_i|$, and bounding the variances by Var$[H_i]\le 1$, then we get \cite{wecker2015progress, huggins2021efficient, rubin2018application}
\begin{equation}\label{eq:measure_times_dependent}
    N_{measure} =\frac{(\sum_i^L |h_i| \sqrt{\text{Var}[H_i]})^2}{\epsilon^2}\le  \frac{(\sum_i^L |h_i|)^2}{\epsilon^2},
\end{equation}
This inequality was proved in \cite{rubin2018application}. If we insist on getting an asymptotic bound then we assume Var$[H_i]=\cO(1)$, then the measurement time is optimal as in Eq.~\eqref{eq:measure_times_dependent}

After $S_i$ measurements, we get a set of values $\{x_j\}$ which is composed of 0 and 1, then the estimated expected value $\langle H \rangle$ and its variance $\text{Var}[H_i]$
\begin{align}
    \langle H_i \rangle &= \frac{1}{S_i} \sum_j^{S_i} x_j \\ 
    \text{Var}[H_i] &= \frac{1}{S_i-1}\sum_j^{S_i}(x_i - \langle H_i \rangle)^2
\end{align}

Clearly, when the system becomes complicated, it is hard to make measurement for the huge amount of resource requirements. From Eq.~\eqref{eq:measure_times_independent}, the fewer terms of the Hamiltonian, the less measurement time will be. Since commuting Pauli terms can be measured simultaneously, one straightforward idea is to partition the Pauli terms into groups, and in each group, all elements commute with each other. This kind of method is called \textit{Clique Cover}. If the number of such groups is minimum, it is called \textit{Minimum Clique Cover}.

\subsection{Qubit wise commuting (QWC)}
Qubit-wise commuting (QWC)\cite{kandala2017hardware} is one of the simplest methods to partition the Hamiltonian terms, which only group the terms that commute in qubit wise. For example $\{I_1 X_2, Z_1X_2\}$ is a QWC group because $I_1$ commutes with $X_1$ and $X_2$ commutes with $X_2$. While the case of $\{X_1X_2,Z_1Z_2\}$ is not a QWC group for $X_1$ and $Z_1$ do not commute. When doing the measurement, we only need to apply for local Clifford unitary $U$. This method can decrease 70\% terms.

\subsection{General commuting (GC)}
The general commuting method \cite{gokhale2019minimizing, verteletskyi2020measurement}  is a more powerful method compared with QWC. Note that QWC is simply the subset of GC corresponding to the case where the number of non-commuting indices is 0. When we make measurements, we need to apply the unitary $U$ transfer to the GC group into QWC group such that we can make measurements directly, i.e., 

\begin{align}
    \langle H \rangle &= \sum_i\bra{\phi} H_i \ket{\phi}\\
    &= \sum_i\bra{\phi} U_i^\dagger U_i H_i U_i^\dagger U_i \ket{\phi}\\
    &= \sum_i\bra{\phi} U_i^\dagger  A_i  U_i \ket{\phi}\\
    &=\sum_i\bra{\psi} A_i  \ket{\psi} = \sum_i \langle A_i \rangle,
\end{align}
where $A_i$ is the corresponding QWC group. The unitary $U$ can be decomposed into Clifford gates. This method can decrease two two-magnitude number of terms.

After partitioning into GC groups, we need to apply unitaries first, and then make simultaneous measurements. Since the unitaries are Clifford \cite{yen2020measuring}, the coefficient won't change after the unitary, therefore the

\section{The Grouped Pauli Norm}\label{app:Grouped_Pauli_Norm}
For a given operator $H$, we have
\begin{equation}
    H=\sum_{i=1}^N H_i = \sum_{i=1}^N\sum_{j=1}^{m_i} a_{ij}P_{ij},
\end{equation}
where $N$ is the number of collections of mutually commuting operators, $m_i$ the number of operators in collection $i$, $P_{ij}$ is the $j$ Pauli operator in the $i$th collection and $a_{ij}\in\mathbb{R}$ is its coefficient.

It has been proved~\cite{romero2018strategies, wecker2015progress,rubin2018application} that the optimal number of $m_i$ measurements of collection $i$ is 
\begin{equation}
    n_i = \frac{1}{\epsilon^2}\sqrt{\text{Var}[H_i]}\sum_{j=1}^N \sqrt{\text{Var}[H_j]}
\end{equation}

Therefore, the total number of measurements is
\begin{equation}
    M=\sum_i^N n_i = \Big(\frac{1}{\epsilon}\sum_{i=1}^N\sqrt{\text{Var}[H_i]}\Big)^2,
\end{equation}
where 
\begin{equation}
    {\rm Var}[H_i] = {\rm Cov}[H_i, H_i] = \langle H_i^2 \rangle - \langle H_i \rangle^2
\end{equation}

\begin{theorem}\cite{gokhale2019minimizing}\label{theorem: cov_to_zero}
Given $M_1, M_2$, two commuting but non-identical Pauli strings, $\mathbb{E}(\text{Cov}[M_1, M_2])=0$ where the expectation is taken over a uniform distribution over all possible state vectors.
\end{theorem}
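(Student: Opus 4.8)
The plan is to reduce the claim to two Haar integrals over the uniform (Fubini--Study) measure on state vectors and then exploit the trace properties of Pauli strings. Writing $\langle A\rangle=\bra{\psi}A\ket{\psi}$, the covariance expands as
\begin{equation}
\text{Cov}[M_1,M_2]=\bra{\psi}M_1M_2\ket{\psi}-\bra{\psi}M_1\ket{\psi}\bra{\psi}M_2\ket{\psi},
\end{equation}
so I would take $\mathbb{E}$ of the two terms separately. (If either $M_i$ is the identity the covariance vanishes pointwise, so I assume both are non-identity Pauli strings acting on $d=2^n$ dimensions.)

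For the first term I would use the first-moment identity $\mathbb{E}[\bra{\psi}A\ket{\psi}]=\tr[A]/d$, which holds because the uniform average of $\proj{\psi}$ is $I/d$. Since $M_1$ and $M_2$ commute and are distinct Hermitian Pauli strings, their product $M_1M_2$ is again a Hermitian Pauli string, and it cannot be proportional to the identity (that would force $M_1=M_2$). Hence $\tr[M_1M_2]=0$ and the first term averages to zero.

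For the second term I would invoke the second-moment (2-design) identity $\mathbb{E}[\proj{\psi}\ox\proj{\psi}]=(I+\text{SWAP})/(d(d+1))$, the normalized projector onto the symmetric subspace. Pairing this with $M_1\ox M_2$ and using $\tr[(M_1\ox M_2)\,\text{SWAP}]=\tr[M_1M_2]$ gives
\begin{equation}
\mathbb{E}[\bra{\psi}M_1\ket{\psi}\bra{\psi}M_2\ket{\psi}]=\frac{\tr[M_1]\tr[M_2]+\tr[M_1M_2]}{d(d+1)}.
\end{equation}
Every non-identity Pauli string is traceless, so $\tr[M_1]=\tr[M_2]=0$, and $\tr[M_1M_2]=0$ as above; thus the second term also averages to zero. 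Subtracting the two vanishing contributions yields $\mathbb{E}[\text{Cov}[M_1,M_2]]=0$.

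The one genuinely non-elementary ingredient, and the step I would treat most carefully, is the second-moment formula: justifying that the Haar average of $\proj{\psi}^{\ox 2}$ equals the normalized symmetric projector $(I+\text{SWAP})/(d(d+1))$. This follows from Schur--Weyl duality---the averaged operator must commute with $U\ox U$ for all $U$, hence lies in the span of $I$ and $\text{SWAP}$, with the two coefficients fixed by taking traces---and it is precisely where the ``uniform over all state vectors'' hypothesis does real work. Everything else reduces to the elementary facts that distinct commuting Paulis multiply to a traceless Pauli and that non-identity Paulis are traceless.
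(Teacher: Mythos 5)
Your proof is correct, and there is an important structural point to note: the paper never proves this theorem at all --- it is stated as an imported result, cited to Ref.~\cite{gokhale2019minimizing}, so there is no in-paper argument to compare yours against; your derivation is a self-contained reconstruction of the standard moment computation. The checkpoints all hold: $\mathbb{E}[\proj{\psi}]=I/d$ kills $\mathbb{E}\langle M_1M_2\rangle$ because the product of two distinct commuting Pauli strings is, up to sign, a non-identity Pauli string (commutation gives Hermiticity of the product, distinctness rules out proportionality to $I$), hence traceless; and $\mathbb{E}[\proj{\psi}\ox\proj{\psi}]=(I+\mathrm{SWAP})/\bigl(d(d+1)\bigr)$ kills $\mathbb{E}[\langle M_1\rangle\langle M_2\rangle]$ since $\tr[M_1]=\tr[M_2]=\tr[M_1M_2]=0$. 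You also correctly quarantine the edge case where one string is the identity (the covariance vanishes pointwise), and you are right that the symmetric-subspace identity is the one genuinely nontrivial ingredient, which your commutant/Schur--Weyl argument pins down properly. Two conventions deserve explicit mention if this were written out in full: first, Pauli strings must carry $+1$ sign, which is exactly what validates the step ``$M_1M_2\propto I\Rightarrow M_1=M_2$'' (with signed Paulis, $M_2=-M_1$ would be commuting, non-identical, and have $\mathbb{E}[\mathrm{Cov}]=-\mathbb{E}[\mathrm{Var}]\neq 0$, so the theorem itself would fail); second, commutation is also what makes $\mathrm{Cov}[M_1,M_2]=\langle M_1M_2\rangle-\langle M_1\rangle\langle M_2\rangle$ a well-defined real covariance, since $M_1M_2$ is then Hermitian --- worth one sentence so the hypothesis is visibly used rather than decorative.
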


For convenience, we assume we do not know the quantum state, and replace the variances and covariances by their expectation value over uniform spherical distribution. From Theorem~\ref{theorem: cov_to_zero}, it is straightforward to have 
\begin{equation}
    M=\sum_i^N n_i = \Bigg(\frac{1}{\epsilon}\sum_{i=1}^N\sqrt{\sum_{j=1}^{m_i} |h_{ij}|^2}\Bigg)^2
\end{equation}

To make a comparison with the ungrouped Pauli norm of Hamiltonian, we denote it as the grouped Pauli norm
\begin{equation}
    \|H\|_{gp} = \sum_{i=1}^N\sqrt{\sum_{j=1}^{m_i} |h_{ij}|^2}.
\end{equation}

Here we give a simple example by setting the Hamiltonian as $H=3XI -YY+ 2ZZ$. The first step is to sort it by the absolute value of coefficients, which will be $\{3XI, 2ZZ, -1YY\}$. At this stage, no collection is created, so we need to create one for the first element, such that $\{3XI\}$. For the second element $2ZZ$, we have to create another collection for $ZZ$ anti-commutes to $XI$. Now the collections will be $\{3XI\}, \{2ZZ\}$. For the third element $-1YY$, it does not commute with the element in the first collection, thus we need to check the commutativity of the second collection. Luckily, the third element commutes with all elements in the second collection, thus we can put it into this collection, which becomes $\{3XI\}, \{2ZZ, -1YY\}$. In this case, the corresponding grouped Pauli norm is $\|H\|_{gp}=5.24$, while the Pauli norm without grouping is $\|H\|_p = 6$.

\section{Proof of Eq.~\eqref{eq:Hamiltonian_sim_convert}}\label{app:proof_of_eq}
When we simulate the engineered Hamiltonian, we have to sandwich it with unitaries, to guarantee that the simulated Hamiltonian is the desired one,

\begin{align}
    U^\dagger e^{-iH't}U &=U^\dagger e^{-iUHU^\dagger t}U\\
    &=U^\dagger\sum_n \frac{(-it)^n}{n!}(UHU^\dagger)^n U\\
    &=U^\dagger\sum_n \frac{(-it)^n}{n!}(UHU^\dagger UHU^\dagger\cdots UHU^\dagger ) U\\
    &=U^\dagger\sum_n \frac{(-it)^n}{n!}(UH^nU^\dagger ) U\\
    &=U^\dagger U\sum_n \frac{(-it)^n}{n!}(H^n )U^\dagger U\\
    &=e^{-iHt}.
\end{align}

The second and last equations come from the Taylor expansion.




\end{document}